\def \beq{\begin{equation}}
\def \eeq{\end{equation}}
\def\and {{\rm \; and \;}}
\newtheorem{theorem}{Theorem}[section]
\newtheorem{lemma}[theorem]{Lemma}
\theoremstyle{definition}
\newtheorem{remark}[theorem]{Remark}
\numberwithin{equation}{section}
\begin{document}

\noindent 
\begin{center}
\textbf{\Large On the cotunneling regime of interacting quantum dots}
\end{center}

\noindent

\begin{center}

\small{Horia D. Cornean\footnote{Department of Mathematical Sciences, Aalborg
     University, Fredrik Bajers Vej 7G, 9220 Aalborg, Denmark; e-mail:
     cornean@math.aau.dk},
Valeriu Moldovean${\rm u}^{1,}$\footnote{National Institute of Materials Physics, P.O. Box MG-7  Bucharest-Magurele, Romania;
e-mail: valim@infim.ro}}

\end{center}

\vspace{0.5cm}

\begin{abstract}

Consider a bunch of interacting electrons confined in a quantum dot. The later is 
suddenly coupled to semi-infinite biased leads at an initial instant $t=0$. We identify 
the dominant contribution to the ergodic current in the off-resonant transport regime, 
in which the discrete spectrum of the quantum dot is well separated from the absolutely 
continuous spectrum of the leads. Our approach allows for arbitrary strength of the 
electron-electron interaction while the current is expanded in even powers of the (weak) 
lead-dot hopping constant $\tau$. We provide explicit calculations for sequential tunneling 
and cotunneling contributions to the current. In the interacting case it turns out that the cotunneling 
current depends on the initial many-body configuration of the sample, while in the non-interacting case
it does not, and coincides with the first term in the expansion of the Landauer formula w.r.t $\tau$.    

\end{abstract}

\section{Introduction}\label{intro}

The dominant role of electron-electron interaction at mesoscopic scale has long been recognized, 
effects like Coulomb blockade, Kondo correlations or charge sensing being currently observed and 
even manipulated in transport experiments. The typical system consists of a few-level quantum dot
coupled to source and drain probes (leads).

In the physics comunity, different approaches to 
the transport problem in interacting systems were developed and intensively used 
for numerical simulations. The choice of the method depends on which parameter of the 
problem allows for a perturbative treatment. If the interaction strength $U$ is rather small, one 
can use the non-equilibrium Green-Keldysh formalism to compute transient or 
steady-state currents by approximating the interaction effects at
different levels \cite{Myohanen}. In the strongly interacting case,
two alternative methods are available: the first one is the T-matrix approach
\cite{Flensberg}, and the second one is the generalized Master
equation formalism \cite{Timm}. Both methods 
rely on perturbative expansions w.r.t to the parameter $\tau$ which
measures the coupling between the leads and the 
sample, while $U$ is typically much larger that $\tau$.  

Compared with the richness of the physics literature on these
subjects, only few rigorous results exist on time-dependent transport in interacting systems,
 and they only apply to weakly interacting systems
\cite{Jaksic,CMP}. More precisely, one needs two important conditions in order to guarantee the existence of a stationary state (NESS): 
1. the single-particle Hamiltonian describing the non-interacting system has purely absolutely 
continuous spectrum and 2. the interaction strength is sufficiently small. Under these conditions, one 
can write down {\it exact} formulas for the stationary current, but the calculations must be performed perturbatively in the interaction. 
These results are valid both for the partitioning \cite{Caroli} and the 
partition-free \cite{Cini} transport scenarios; moreover \cite{CMP}, one can prove that 
in the partitioning case the stationary current is {\it independent} on the initial state 
of the sample.  

In this paper we deal with a strongly interacting regime which drastically differs from the one discussed above. 
In particular, the previously mentioned two conditions are not satisfied. More precisely, here we only consider 
quantum dots whose discrete spectrum is far away from the absolutely continuous 
spectrum of the leads. Otherwise stated, this {\it off-resonant condition} says that the single-particle Hamiltonian of the fully
coupled system has discrete bound states and no resonances. The existence of a stationary state in this case
is still an open problem and we do not address it here. We know though that even in the non-interacting case one must take the  
ergodic limit in order to kill off the bound state induced current oscillations \cite{Aschb, CNZ,Stefanucci}. 

From now on $\tau$ will denote the hopping constant between the leads and sample. If $I_{\alpha,t}(\tau)$ is the current at time $t\geq 0$ 
in a given lead $\alpha$, its ergodic (Ces{\`a}ro) limit is defined as:
\begin{equation}\label{Jerg}
I_{\alpha,\infty}(\tau):=\lim_{T\to\infty}\frac{1}{T}\int_0^{T} I_{\alpha,t}(\tau)dt\quad 
\left (=\lim_{\eta\searrow 0}\eta\int_0^{\infty}e^{-\eta t}I_{\alpha,t}(\tau)dt\right ),
\end{equation}
where the second equality expresses the known fact that if the Ces{\`a}ro limit exists, then it can also be calculated through the 
Abel limit $\eta\searrow 0$. 

The central object of our study will be the quantity  
\begin{equation}\label{Jeta1}
I_{\alpha}(\eta,\tau):=\eta\int_0^{\infty}e^{-\eta t}I_{\alpha,t}(\tau)dt,\quad \eta >0,
\end{equation}
and its behavior as a function of $\tau$. 
We will show that for a fixed $\eta>0$ and for $\tau/\eta$ sufficiently small one 
can expand the RHS of Eq.(\ref{Jeta1}) in a convergent series of even powers of $\tau$, that is:
\begin{equation}\label{Jeta}
I_{\alpha}(\eta,\tau)=\sum_{k=1}^{\infty}\tau^{2k}C_{\alpha,2k}(\eta).
\end{equation}
The main questions are the following: 
{\it \begin{enumerate}
\item  When does the Ces{\`a}ro limit $I_{\alpha,\infty}$ exist? (This would imply that  $I_{\alpha}(0_+,\tau)$ exists and equals $I_{\alpha,\infty}$).
\item How many coefficients $C_{\alpha,2k}(\eta)$ admit the limit $\eta\searrow 0$?
\item If $I_{\alpha,\infty}$ exists, does it have an asymptotic expansion around $\tau=0$? If yes, 
can we write:
\begin{equation}\label{Jeta2}
I_{\alpha,\infty}\sim \sum_{k=1}^{\infty}\tau^{2k}C_{\alpha,2k}(0_+)?
\end{equation}
\end{enumerate}
}
We have a good understanding of the problem for non-interacting systems. The off-resonant 
condition on the spectrum of the one particle Hamiltonian is crucial; in the resonant case we can prove that $I_{\alpha,\infty}$ exists and has an 
asymptotic expansion, but its leading coefficient is {\it not} given by $C_{\alpha,2}(0_+)$ (compare \eqref{LB33} with \eqref{seq}). 
The interacting case is open.

In this paper we will identify and compute the coefficients corresponding to $k=1,2$ for the interacting case under some off-resonant conditions. 
These two terms have a clear physical meaning. $C_{\alpha,2}$ describes the sequential tunneling processes 
(i.e. electrons enter or leave the dot one-by-one). 
In the off-resonant regime considered here this contribution is absent because the energy conservation requires 
some levels of the isolated dot to be within the continuous spectrum of the leads. $C_{\alpha,4}$ then gives the dominant 
contribution to the current and contains the so-called cotunneling processes, in which electrons tunnel from and to the dot in 
pairs (cooperative tunneling). To our best knowledge, the cotunneling regime has not been previously discussed in a rigorous context.   

Before summarising the content of the paper let us comment on the different transport regimes 
(i.e. resonant vs. cotunneling) and some subtleties related to the existence of NESS and of the perturbative 
expansion w.r.t. $\tau$. In the absence of the electron-electron interaction the steady-state current is given by the 
Landauer formula (see \eqref{LB}) 
which has been rigorously proved using various methods \cite{Aschb, CNZ, CJM1,Nenciu,CDNP-1}. This formula implies an effective resolvent 
which is {\it not} always analytic w.r.t $\tau$ (see the discussion around \eqref{calT}and \eqref{LB33}).

The paper is organized as follows: in Section \ref{rezultate} we introduce the model, the problem, state the main result, and give a number of 
consequences. Section \ref{limitatermo} deals with the thermodynamic limit while Section \ref{offresonant} 
contains the proof of the sequential and cotunneling formulas. Section \ref{numerical} is devoted to numerical results obtained via the 
generalized Master equation method \cite{MMG}. We find out that in the off-resonant case the 
current does not settle to a steady-state, but the ergodic limit seem to exist. We conclude in Section \ref{concluzii}.

\section{Notation, setting and the main result}\label{rezultate}

We shall adopt the partitioning approach to the transport problem \cite{Caroli}. A finite system $S$ is coupled to $M\geq 2$ 
 noninteracting one-dimensional semi-infinite leads (i.e. particle reservoirs) at some initial instant $t_0$. 
For simplicity, we consider a discrete model in which the sample is modelled as a finite lattice 
$\Gamma \subset \mathbb{Z}^{2}$ and 
the leads are described by one-dimensional discrete Laplacians on the half-line with Dirichlet boundary conditions. 
The one-particle Hilbert space is thus 
${\cal H}:= l^2(\mathbb{N}_1)\oplus\dots\oplus l^2(\mathbb{N}_M)\oplus l^2(\Gamma)=:{\cal H}_L\oplus{\cal H}_S$. 
We shall use the geometrical (standard) basis in ${\cal H}_L$, which is the set $\{|i_{\alpha}\rangle : i\geq 0,\; 1\leq\alpha\leq M\}$ where 
$i_{\alpha}$ means the $i$-th site of the lead $\alpha$. Similarly we have the basis $\{|m\rangle \}_{m\in \Gamma}$ 
for ${\cal H}_S$. We denote by $|m_{\alpha}\rangle $ the vector corresponding to the sample site to which the lead $\alpha$ is attached. 

The leads are suddenly coupled to the sample at $t=0$. Then for $t>0$ 
the single-particle Hamiltonian reads as:
\begin{equation}\label{Hmic}
h=h_S+h_L+h_T,
\end{equation} 
where  
\begin{align}\label{hS}
h_S&=\sum_{ m,n\in\Gamma }t_{mn}|m\rangle\langle n|,\\\label{hL}
h_L&=\sum_{\gamma=1}^M t_L\left (\sum_{i\geq 0}
|i_{\gamma}\rangle\langle (i+1)_{\gamma}|+\sum_{i\geq 1}
|i_{\gamma}\rangle\langle (i-1)_{\gamma}|\right )
:=\sum_{\gamma=1}^Mh_{\gamma},\\\label{hT}
h_T&=\tau\sum_{\gamma=1}^M(|0_{\gamma}\rangle\langle m_{\gamma}|+|m_{\gamma}\rangle\langle 0_{\gamma}|  ).
\end{align}
In the above equation $h_T$ is the so called tunneling Hamiltonian and $\tau$ is the coupling strength.
Here $\{t_{mn}\}_{m,n\in\Gamma}$ is any symmetric matrix and $t_L>0$ is the hopping constant of the leads.

We also introduce the eigenfunctions and eigenvalues of $h_S$ and the 
generalized eigenfunctions of $h_{\gamma}$:
\begin{equation}\label{geneigen}
h_S\phi_{\lambda}=e_{\lambda}\phi_{\lambda}, 
\quad h_{\gamma}\varphi_E^{\gamma}=E\varphi_E^{\gamma},          
\end{equation}
where $E$ is the energy associated to an electron propagating on leads with momentum $q\in (0,\pi)$ 
(the leads are identical). The explicit form of $\varphi_E^{\gamma}$ 
in a given site $i\geq 0$ of the lead $\gamma$ is taken to be:
\begin{equation}\label{semiinf}
\varphi_E^{\gamma}(j)=\frac{\sin [(j+1)q]}{\sqrt{\pi t_L \sin q} }, \quad E=2t_L\cos (q)\in [-2t_L,2t_L],\quad |\varphi_E^{\gamma}(0)|^2=
 \frac{\sqrt{1-\frac{E^2}{4t_L^2}}}{\pi t_L }.
\end{equation}
When the leads are finite and of length $\Lambda$, the lead spectrum is purely discrete and given by $\{\varepsilon_{q_{\gamma}}\}$ where $q$ now takes 
discrete values. A corresponding eigenfunction is denoted by $\varphi_{q_{\gamma}}$. The notation of the corresponding Hamiltonians is changed into 
$h_{\gamma}^{(\Lambda)}$, and we have:  
\begin{equation}
h_{\gamma}^{(\Lambda)}\varphi_{q_{\gamma}}=\varepsilon_{q_{\gamma}}\varphi_{q_{\gamma}}.
\end{equation}

We now formulate the transport problem in the language of second quantization (see \cite{Martin} for the 
standard procedures and notations). Let ${\cal F}={\cal F}_L\otimes{\cal F}_S$ be the Fock space constructed 
from the Hilbert space $\cal H$. The interaction of strength $U$ between electrons in the sample is given by 
the two-particle operator:
\begin{equation}\label{Vint}
V=\frac{U}{2}\sum_{m,n\in \Gamma }v(m-n)a^*(|m\rangle)a(|m\rangle )a^*(|n\rangle)a(|n\rangle),
\end{equation}
where $a^*(|m\rangle)$ and $a(|n\rangle)$ are creation and  annihilation operators in the sites $m,n$ and $v(m-n)$ is a 
pair potential which by assumption is bounded for $m=n$. 
These operators act in the  antisymmetrized Fock space ${\cal F}_S$ generated by $l^2(\Gamma)$. 
Similarly one defines creation and  annihilation operators in the leads, 
and then rewrites $h$ in the second quantization w.r.t the geometrical basis.
We use capital letters to denote the second quantized versions of the one particle operators: $H_a=d\Gamma (h_a),\,\, a\in\{S,L,T\}$. 
Then the total Hamiltonian  of the coupled and interacting system reads as follows:
\begin{equation}\label{Htot}
H=H_S+V+H_L+H_T=:H_0+H_T
\end{equation} 
 
The current operator in the lead $\alpha$ is introduced as the time derivative of the electron number 
operator $N_{\alpha}=\sum_{i\geq 0 }a^*(i_{\alpha})a(i_{\alpha}) $. Using the anticommutation relations 
one gets:
\begin{equation}\label{current2}
J_{\alpha}=-e{\dot N}_{\alpha}=-\frac{ie}{\hbar}[H,N_{\alpha}]=-\frac{ie}{\hbar}[H_T,N_{\alpha}]
=\frac{ie\tau }{\hbar}(a^*(|0_{\alpha}\rangle)a(|m_{\alpha}\rangle)-a^*(|m_{\alpha}\rangle)a(|0_{\alpha}\rangle)).
\end{equation} 
>From now on we adopt the convention $e=\hbar=1$. Note that the same form of $J_{\alpha}$ holds for leads
of finite length.

The different chemical potentials of the leads are $\mu=[\mu_1,\mu_2,...,\mu_M]$, and  
the inverse temperature $\beta>0$ is taken constant. 
The equilibrium sub-state of the leads is characterized by the following density matrix:
\begin{equation}
\rho^{(\Lambda)}_L:=
\Pi_{\gamma=1}^M\frac{e^{-\beta (H^{(\Lambda)}_{L,\gamma}-\mu_\gamma N_\gamma)}}{{\rm Tr}_{{\cal F}_L}
\{ e^{-\beta (H^{(\Lambda)}_{L,\gamma}-\mu_\gamma N_\gamma)} \}},
\end{equation}
which consists of a Gibbs state on each lead.

The initial density matrix of the sample $\rho_S$ can be any positive function of $H_S+V$, with trace one. 
For example, if at $t\leq 0$ the mesoscopic sample is empty, then we
have to take $\rho_S=|0,0,...\rangle\langle 0,0,...|$  where $|0,0,... \rangle$ is the vacuum state in 
${\cal F}_S$ written w.r.t the occupation number basis. But equally
well, one may also 
consider that the sample already contains a few interacting particles at
$t\leq 0$. Let us denote by $|\nu\rangle$ the eigenstates of $H_S+V$, and by $E_{\nu}$ 
its many-body energies
($(H_S+V)|\nu\rangle=E_{\nu}|\nu\rangle$). Without loss of generality,
we will take $\rho_S$ to be a pure state 
given by an initial many-body state (MBS)  
henceforth denoted by $\nu_0$. Thus $\rho_S=|\nu_0\rangle\langle \nu_0|$.

The main quantity we are interested in is the statistical average 
of the current operator on lead $\alpha$. To this end we introduce the statistical operator $\rho^{(\Lambda)}$ of the system
with finite leads. It solves the quantum Liouville equation for $t>0$ and is given by:
\begin{equation}\label{quantumliu}
\rho^{(\Lambda)}(t)=e^{-itH^{(\Lambda)}}\rho_0^{(\Lambda)}e^{itH^{(\Lambda)}},\quad 
\rho_0^{(\Lambda)}:=\rho^{(\Lambda)}_L\otimes\rho_S.
\end{equation}
If $B$ is an observable acting in the Fock space ${\cal F}$, we denote by $B(t):=e^{itH^{(\Lambda)}}B e^{-itH^{(\Lambda)}}$ its Heisenberg 
evolution. Then the average value of $B$ at time $t$ is defined as:
\begin{equation}\label{ref}
 \langle B (t)\rangle_{{\rm ref }}:=\lim_{\Lambda\to\infty} {\rm Tr}_{\cal F} \{  \rho^{(\Lambda)}(t) B \}= 
\lim_{\Lambda\to\infty} {\rm Tr}_{\cal F} \{  \rho_0^{(\Lambda)} B(t) \},
\end{equation}
whenever this limit exists. Then our results are summarised in the following theorem:
\newpage
\begin{theorem}Let $f(x)=1/(e^{\beta x}+1)$ be the Fermi function and $f_{\alpha}(E)=f(E-\mu_{\alpha})$. Let $\chi_L$ be 
the characteristic function of the interval $[-2t_L,2t_L]$. Then:

\noindent {\rm (i)}. The transient current $I_{\alpha,t}(\tau)$ in the lead $\alpha $ is given by 
\begin{equation}\label{currTDL}
I_{\alpha,t}(\tau):=\langle J_{\alpha}(t)\rangle_{{\rm ref }},\quad t\geq 0,
\end{equation}
and defines an entire function of $\tau$. 

\noindent {\rm (ii)}. Let $I_{\alpha}(\eta,\tau)=\eta\int_0^{\infty}e^{-\eta t}I_{\alpha,t}(\tau)dt$ as in \eqref{Jeta1}.   
Then one has (see \eqref{Jeta} and \eqref{semiinf}):
\begin{align}\nonumber
&C_{\alpha,{\rm
    seq}}:=C_{\alpha,2}(0_+)=\frac{2}{t_L}\sum_{\nu} \sqrt{1-\frac{(E_{\nu_0}-E_{\nu})^2}{4t_L^2}}\\
&
\times \left\lbrace [1-f_{\alpha}(E_{\nu_0}-E_{\nu})]|A_{\nu\nu_0 }|^2\chi_L(E_{\nu_0}-E_{\nu}) 
-f_{\alpha}(E_{\nu}-E_{\nu_0})|A^*_{\nu\nu_0 }|^2  \chi_L(E_{\nu}-E_{\nu_0})\right \}, \label{seq}
\end{align}
where:
\begin{equation}\label{Anu}
A^{\#}_{\nu_i\nu_j}(m)=\langle \nu_i,a^{\#}(m) \nu_j \rangle, \quad \#=*,\cdot . 
\end{equation}

\noindent{\rm (iii)}. Assume that the following two off-resonant conditions
are fullfiled: 

a). If $|\nu\rangle,|\nu'\rangle$ differ by one particle, then 
$E_{\nu}-E_{\nu'}\notin [-2t_L,2t_L]$; 

b).  If  $|\nu\rangle, |\nu'\rangle$ differ by two
particles, then $E_{\nu}-E_{\nu'}\notin [-4t_L,4t_L]$.

\noindent Then we have: 
\begin{equation}\label{cot1}
C_{\alpha,{\rm
    cot}}:=C_{\alpha,4}(0_+)=\frac{1}{\pi^2t_L^2}\sum_{\gamma}\int_{-2t_L}^{2t_L}dE \;
\left (1-\frac{E^2}{4t_L^2}\right )
\left ( {\cal P}_{\gamma\alpha}(E)-{\cal P}_{\alpha\gamma}(E)  \right),
\end{equation}
where ${\cal P}_{\gamma\alpha}$ is the cotunneling rate:
\begin{align}\label{J4fin}
&{\cal P}_{\gamma\alpha}(E)=\sum_{\nu,\nu',\nu''}\\
&\left\lbrace \chi_{L}(E-E_{\nu'}+E_{\nu_0})\frac{f_{\gamma}(E) [1-f_{\alpha}(E-E_{\nu'}+E_{\nu_0})]}
{ (E_{\nu}-E_{\nu_0}-E)(E_{\nu'}-E_{\nu''}-E)  }
A_{\nu_0\nu}(m_{\gamma})A_{\nu\nu'}^*(m_{\alpha})A_{\nu'\nu''}^*(m_{\gamma})A_{\nu''\nu_0}(m_{\alpha})
\right .\nonumber\\\nonumber
&-\chi_{L}(E+E_{\nu'}-E_{\nu_0})\frac{[1-f_{\alpha}(E)] f_{\gamma}(E+E_{\nu'}-E_{\nu_0} )}
{(E_{\nu}-E_{\nu_0}+E) (E_{\nu_0}-E_{\nu''}-E )}
A_{\nu_0\nu}^*(m_{\alpha})A_{\nu\nu'}(m_{\gamma})A_{\nu'\nu''}^*(m_{\gamma})A_{\nu''\nu_0}(m_{\alpha}) \\
\nonumber
&+\chi_{L}(E-E_{\nu'}+E_{\nu_0})\frac{f_{\alpha}(E)[1-f_{\gamma}(E-E_{\nu'}+E_{\nu_0} )]}
{ (E_{\nu}-E_{\nu_0}-E)(E_{\nu_0}-E_{\nu''}+E)  }
A_{\nu_0\nu}(m_{\alpha})A_{\nu\nu'}^*(m_{\gamma})A_{\nu'\nu''}(m_{\gamma})A^*_{\nu''\nu_0}(m_{\alpha})\\
&\left. -\chi_{L}(E+E_{\nu'}-E_{\nu_0})\frac{[1-f_{\gamma}(E)]f_{\alpha}(E+E_{\nu'}-E_{\nu_0} )}
{ (E_{\nu}-E_{\nu_0}+E)(E_{\nu'}-E_{\nu''}+E)  }
A_{\nu_0\nu}^*(m_{\gamma})A_{\nu\nu'}(m_{\alpha})A_{\nu'\nu''}(m_{\gamma})A^*_{\nu''\nu_0}(m_{\alpha})
\right\rbrace .\nonumber 
\end{align}

\end{theorem}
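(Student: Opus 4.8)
The plan is to compute $I_{\alpha}(\eta,\tau)$ by expanding the Heisenberg-evolved current operator $J_\alpha(t)$ in powers of the tunneling Hamiltonian $H_T$ and then taking the thermodynamic limit $\Lambda\to\infty$ and the Laplace transform. Concretely, I would write the Dyson/Duhamel expansion for $e^{itH}$ around $e^{itH_0}$, so that $J_\alpha(t) = \sum_{k\ge 0}(i\tau)^k$ (nested commutators of $H_T$ with $J_\alpha$, integrated over a time simplex); since $J_\alpha$ itself carries one factor of $\tau$, and the reference state $\rho_0$ is a product state that factorizes over leads and sample with the leads in a quasi-free Gibbs state, only even total powers of $\tau$ survive after taking the trace (each surviving Wick contraction must pair a lead creation with a lead annihilation, forcing an even number of $h_T$-insertions). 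This reproduces the series \eqref{Jeta}, and identifies $C_{\alpha,2}(\eta)$ with the second-order term and $C_{\alpha,4}(\eta)$ with the fourth-order term.

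For part (i), I would argue that for finite leads the trace is over a finite-dimensional (or trace-class) setting, $J_\alpha(t)$ is entire in $\tau$ with locally uniform bounds, and the thermodynamic limit \eqref{ref} preserves analyticity because the lead Green's functions converge (with uniform-in-$\Lambda$ bounds on compacts) to the explicit half-line resolvent built from \eqref{semiinf}; a Vitali/Montel argument then gives that $I_{\alpha,t}(\tau)$ is entire. For parts (ii) and (iii), the key computational step is to carry out the Wick contractions in the order-$\tau^2$ and order-$\tau^4$ traces. Each lead contraction produces, after the thermodynamic limit, an integral over the lead energy $E\in[-2t_L,2t_L]$ against the spectral density $|\varphi_E^\gamma(0)|^2 = \frac{1}{\pi t_L}\sqrt{1-E^2/(4t_L^2)}$ weighted by a Fermi factor $f_\gamma(E)$ (for an occupied lead line) or $1-f_\gamma(E)$ (for an empty one). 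The sample contractions produce the matrix elements $A^{\#}_{\nu_i\nu_j}(m)$ of \eqref{Anu} sandwiched between the fixed initial MBS $\nu_0$, with intermediate many-body states $\nu,\nu',\nu''$ summed over. The remaining time integrals over the simplex, combined with the Laplace weight $\eta e^{-\eta t}$, are elementary oscillatory integrals: each internal vertex contributes an energy denominator of the form $(E_\nu - E_{\nu_0}\mp E)^{-1}$ (shifted by $\eta$ before the limit), and one overall energy-conservation factor emerges which, after the $E$-integration and the $\eta\searrow 0$ limit, becomes the characteristic function $\chi_L$ evaluated at the appropriate many-body energy difference.

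The passage to $\eta\searrow 0$ is where the off-resonant conditions (iii.a) and (iii.b) do their work, and I expect this to be the main obstacle to make rigorous. At order $\tau^2$, a denominator $(E_\nu - E_{\nu_0} - E + i\eta)^{-1}$ integrated over $E$ against a smooth density develops, as $\eta\to 0$, a principal-value part plus a delta contribution $\propto \delta(E - (E_\nu-E_{\nu_0}))$; the latter is precisely what survives and yields the sequential formula \eqref{seq} with its $\chi_L$ — but note that in the genuinely off-resonant regime (iii.a) this delta support lies outside $[-2t_L,2t_L]$, so $C_{\alpha,\mathrm{seq}}=0$, consistent with the discussion in the introduction; the formula \eqref{seq} is stated for general $\nu_0$ without assuming (iii.a), which is why the $\chi_L$ factors appear explicitly. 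At order $\tau^4$ one has products of two internal denominators and one ``outer'' $\eta$-regularized denominator; condition (iii.b) guarantees that the two-particle energy differences $E_{\nu'}-E_{\nu_0}$ avoid $[-4t_L,4t_L]$ so that the internal denominators $(E_\nu-E_{\nu_0}\mp E)^{-1}$ and $(E_{\nu'}-E_{\nu''}\mp E)^{-1}$ stay bounded away from zero uniformly in $E\in[-2t_L,2t_L]$ and no pinching singularity develops, while condition (iii.a) ensures the single outer denominator again produces a clean delta that fixes the outgoing lead energy to $E - E_{\nu'}+E_{\nu_0}$ (or the mirrored combination), inserting the $\chi_L$ factors in \eqref{J4fin}. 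Dominated convergence then justifies interchanging $\lim_{\eta\searrow 0}$ with the $E$-integral and the finite many-body sums, and bookkeeping the four ways an electron can hop in/out through leads $\gamma$ and $\alpha$ (with the appropriate ordering of $a$ versus $a^*$ on the sample, hence the pattern of starred and unstarred $A$'s) produces the four terms of \eqref{J4fin}; antisymmetrizing the current (in minus out) and summing over the auxiliary lead $\gamma$ gives \eqref{cot1}. The delicate points worth isolating are: (1) uniform-in-$\Lambda$ control of the multi-line lead contractions so the thermodynamic limit commutes with the $\tau$-expansion; (2) absolute convergence of the time-simplex integrals after the Laplace regularization, which needs $\tau/\eta$ small for the full series but only the finiteness of each fixed-order term for (ii)–(iii); and (3) verifying that the naively-divergent-looking pieces at $\eta=0$ (the principal-value parts, and terms where an internal denominator could in principle vanish) either cancel between the four contributions or are excluded by (iii.a)–(iii.b).
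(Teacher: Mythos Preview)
Your proposal follows essentially the same route as the paper: Dyson expansion of $e^{-itH}$ around $e^{-itH_0}$, insertion of the many-body eigenbasis of $H_S+V$ on the sample side, Wick contractions for the lead operators against the quasi-free Gibbs state, thermodynamic limit term by term, and then the $\eta\searrow 0$ limit producing delta functions that are filtered by the off-resonant hypotheses. The paper makes this concrete by writing the two possible triple time-simplex integrals explicitly (called $B_1$ and $B_2$), listing the $48$ fourth-order contributions, and checking case by case which delta contributions survive; your description ``elementary oscillatory integrals $\to$ energy denominators plus one delta'' is exactly this computation in outline.

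One attribution is inverted in your last paragraph. The denominators $(E_\nu-E_{\nu_0}\mp E)^{-1}$ and $(E_{\nu'}-E_{\nu''}\mp E)^{-1}$ appearing in \eqref{J4fin} involve pairs of many-body states differing by \emph{one} particle (each factor $A^{\#}_{\nu_i\nu_j}(m)$ changes the particle number by one), so it is condition (iii.a), not (iii.b), that keeps them bounded away from zero for $E\in[-2t_L,2t_L]$. Condition (iii.b) enters differently: among the many delta functions produced by the $\eta\searrow 0$ limit at fourth order there are some supported on $\{E_\nu-E_{\nu'}=\pm(\varepsilon_{q_1}+\varepsilon_{q_2})\}$ with $|\nu\rangle,|\nu'\rangle$ differing by two particles, and (iii.b) kills precisely those. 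Thus both (iii.a) and (iii.b) are used to discard unwanted deltas, while (iii.a) does the additional work of making the surviving energy denominators regular. This does not affect the soundness of your plan, only which hypothesis is invoked at which step. For part (i), the paper obtains entireness from a direct tail estimate $|I_{\alpha,t}(\tau)|\le \sum_{n\ge 1}\tau^{2n}C^{2n}t^{2n}/(2n)!$ (after a Weierstrass-approximation lemma for the thermodynamic limit) rather than Vitali/Montel; either argument suffices.
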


\vspace{0.5cm}

\begin{remark}\label{remarca22}
Provided that \eqref{Jeta2} holds true, if $\tau$ is sufficiently small then the ergodic current $I_{\alpha,\infty}$ 
should be well approximated by $\tau^2C_{\alpha,2}(0_+)+
\tau^4C_{\alpha,4}(0_+)$. Both terms describe tunneling processes from and into the dot. 
Note that $E_{\nu'}=E_{\nu_0}$ is allowed in the above sums, thus $\chi_{L}(E-E_{\nu'}+E_{\nu_0})$ has to be replaced by $1$ in those terms.
\end{remark}

\begin{remark}\label{remarca22'}In the expression of $C_{\alpha,2}(0_+)$, the factor 
$(1-f_{\alpha}(E))|A_{\nu\nu_0 }|^2$ is the tunneling probability from the dot to the leads of an
electron with energy $E$ (the corresponding state in the lead must be empty). Similarly, the second term of $C_{\alpha,2}(0_+)$ 
represents processes in which the many-body state of the dot changes by 'absorbing' one electron from the leads.
These processes are called {\it sequential}, as electrons  tunnel one by one.
It is clear that in the off-resonant regime (i.e. $E_{\nu}-E_{\nu_0}\notin [-2t_L:2t_L]$) the sequential tunneling
is suppressed and one has to go to the next term. Note that in the
resonant regime of the non-interacting case, this term cannot be recovered by expanding the
Landauer formula in powers of $\tau$ (see \eqref{LB33} for further details). The phenomenon which happens is 
well described by the following
toy example. Let $K$ be a constant either equal to $0$ or $1$. Define the
functions:
$$I(\eta,\tau;K):=(1-K)\frac{\eta \tau^2}{\eta +\tau^2}+\frac{\tau^4}{K+\eta+\tau^2}\arctan\left
  (\frac{1}{K+\eta+\tau^2}\right ).$$
The resonant case is modeled by the condition $K=0$. In that case we
have:
$$I(0_+,\tau;0)=\tau^2\arctan(1/\tau^2)=\tau^2\frac{\pi}{2}+\mathcal{O}(\tau^4),\quad I(\eta,\tau;0)=
\tau^2+\mathcal{O}(\tau^4),$$
which shows that $C_{\alpha,2}(0_+)=1$ and we cannot recover the
'true' behavior of  $I(0_+,\tau;0)$ from such an expansion. 

The off-resonant case is modeled by $K=1$. Then:
$$I(0_+,\tau;1)=\tau^4\frac{\pi}{4}+\mathcal{O}(\tau^6),\quad 
I(\eta,\tau;1)=\frac{\tau^4}{1+\eta}\arctan\left
  (\frac{1}{1+\eta}\right )+\mathcal{O}(\tau^6).$$
In this case we see that $C_{\alpha,2}(0_+)=0$ and
$C_{\alpha,4}(0_+)=\pi/4$, and they provide a good approximation for the 'true'
value of $I(0_+,\tau;1)$. 
\end{remark}

\begin{remark}\label{remarca23} The contribution 
$I_{\alpha,{\rm cot}}:=\tau^4C_{\alpha,4}(0_+)$ is the so-called
cotunneling current. Further discussion on it will be given in Section 5.2.
Here we only stress that in the absence of the bias, $I_{\alpha,{\rm
    cot}}=0$ because in this case the chemical potentials of the 
leads are equal and hence ${\cal P}_{\gamma\alpha}={\cal P}_{\alpha\gamma}$.  In the off-resonant non-interacting case, we can prove that 
it does not depend on the initial state in the sample and it is given by the first term of the Landauer formula (see \eqref{cotLB}). 
\end{remark}

\begin{remark}\label{remarca244}{\bf Memory effects and dependence on the initial state}. 
For small samples one is able to simplify the formula giving  
the cotunneling current. A typical example is a two-site quantum
dot. Let us denote by 
$e_{1,2}$ the eigenvalues of the non-interacting dot. We also have
that $h_S\phi_1=e_1\phi_1$ and $h_S\phi_2=e_2\phi_2$. The four 
many-body states are $E_1=0$ (empty sample), $E_2=e_1$ (the ground state
of $h_S$), $E_3=e_2$ (the excited
state of $h_S$) and $E_4=e_1+e_2+U$ (fully occupied)
where $U$ denotes the strenght of the Coulomb interaction. Let us
consider that the initial state of the system
is $|\nu_0\rangle= |10\rangle$ and $E_{\nu_0}=e_1$, which means that
before the coupling we start with exactly one electron in the sample, occupying
the lowest level. 

The two spectral conditions imposed by the off-resonant regime have to be
checked for any given set of parameters. Let us
explicitely write down these conditions for a sample having only two sites: 
a). $e_{1,2}\notin [-2t_L,2t_L]$, $e_{1,2}+U\notin [-2t_L,2t_L]$; 
b). $E_4-E_1=e_1+e_2+U\notin [-4t_L,4t_L]$.
These conditions can be satisfied in many situations, for example when
both $e_{1,2}$ are either very negative or very positive such that they are
far away from the spectrum of the leads.

The cotunneling current in Eq.(\ref{J4fin}) can be further 
simplified by calculating the coefficients $A$ and $A^*$.
In order to do that we have to express the creation and annihilation
operators in the contact sites 
$a^\#(|m_{\alpha}\rangle) $ and $a^\#(|m_{\gamma}\rangle) $ in terms of creation and annihilation
operators in a given single-particle eigenstates 
$a^\#(|\phi_1\rangle) $ and $a^\#(|\phi_2\rangle)$. This leads to obvious selection rules for the 
MBS. Calculating the cotunneling rates terms one can identify
elastic and inelastic contributions to the current:
\begin{equation}
I_{\alpha, {\rm cot}}=\tau^4C_{\alpha,4}(0_+)=I_{{\rm el}}+I_{{\rm in}},
\end{equation}
where:

\begin{align}\label{Jelastic}
I_{{\rm el}}&=\frac{\tau^4}{\pi^2t_L^2}\sum_{\gamma} \int_{-2t_L}^{2t_L}dE\;\left (1-\frac{E^2}{4t_L^2}\right )\left |
  \frac{\phi_1(m_{\alpha})\overline{\phi_1(m_{\gamma})}}{E-e_1}
+ \frac{\phi_2(m_{\alpha})\overline{\phi_2(m_{\gamma})}}{E-e_2-U}\right
|^2(f_{\alpha}(E)-f_{\gamma} (E))
\end{align}
and 
\begin{align}\nonumber
&I_{{\rm in}}=\frac{\tau^4}{\pi^2t_L^2}\int_{-2t_L}^{2t_L} dE\;\left (1-\frac{E^2}{4t_L^2}\right )\\
&\nonumber \left \lbrace \chi_L(E+e_1-e_2)|\phi_1(m_{\gamma})|^2|\phi_2(m_{\alpha})|^2
\frac{f_{\alpha}(E)(1-f_{\gamma} (E+e_1-e_2))}{e_2+U-E} 
\left (\frac{1}{e_2-E}-\frac{1}{e_2+U-E}\right) \right . \\\nonumber
&+\left. \chi_L(E+e_2-e_1)|\phi_1(m_{\gamma})|^2|\phi_2(m_{\alpha})|^2
  \frac{f_{\alpha}(E+e_2-e_1)
(1-f_{\gamma} (E))}{e_1-E}
\left (  \frac{1}{e_1+U-E}-\frac{1}{e_1-E}  \right ) \right\rbrace  \\  \label{Jinel}
&-\lbrace \alpha\leftrightarrow \gamma\rbrace .
\end{align}

Let us comment on the two contributions to the cotunneling in this case. 
Obviously $I_{{\rm el}}$ is given by a Landauer formula, even if the interaction strength $U$
appears in one of the denominators. The two electrons implied in the pairwise tunneling have the same energy $E$
hence this is elastic cotunneling. 
This contribution can be compared with the one calculated in Ref. \cite{Pedersen} via what the authors call the 'T-matrix method'.        
To make the connection to their results one should use the cotunneling rate $\gamma^{RL}_{11}$ given in Eq. (19) of 
Ref. \cite{Pedersen} and calculate the steady-state current as $\gamma^{RL}_{11}-\gamma^{LR}_{11}$. 

In contrast, $I_{{\rm in}}$ can no longer be written in a Landauer form and contains 
inelastic processes, as the energies in the Fermi functions do not
coincide. Note that $I_{{\rm in}}$ 
vanishes in the non-interacting case: this happens because for $U=0$
the contributions of various inelastic processes cancel each
other. Moreover, if $|e_2-e_1|>4t_L$ then $\chi_L(E+e_2-e_1)$ and
$\chi_L(E+e_1-e_2)$ will vanish for all $E\in
[-2t_L,2t_L]$, thus again $I_{{\rm in}}=0$. But otherwise it is
nonzero. 

We can repeat this computation choosing the initial condition
$|\nu_0\rangle=|00\rangle $ (the sample is empty before coupling it to
the leads). In this case we find:
\begin{equation}\label{elastycu}
I_{\alpha, {\rm cot}}=\frac{\tau^4}{\pi^2t_L^2}\sum_{\gamma} \int_{-2t_L}^{2t_L}dE\;\left (1-\frac{E^2}{4t_L^2}\right )\left |
  \frac{\phi_1(m_{\alpha})\overline{\phi_1(m_{\gamma})}}{E-e_1}
+ \frac{\phi_2(m_{\alpha})\overline{\phi_2(m_{\gamma})}}{E-e_2}\right
|^2(f_{\alpha}(E)-f_{\gamma} (E)).
\end{equation}
 Otherwise stated, for this initial state of the sample 
the cotunneling current is given by the non-interacting
Landauer formula.  This means that the cotunneling current in the
  interacting case {\it depends on the 
initial conditions of the sample}. This is not such an unexpected
result, as different initial many-body configurations 
of the sample select different relevant cotunneling processes. 
We stress though that this memory effect concerns only 
the cotunneling current in the off-resonant regime. In the resonant
case where sequential and cotunneling processes coexist
we do not expect this to happen.   
\end{remark}

\section{Proof of {\rm (i)}: thermodynamic limit and the definition of the transient}\label{limitatermo}

In mesoscopic quantum transport we have to deal with two 
aparently contradictory conditions: 1). the leads must be finite if we want the 
total density matrix to be trace class, and in that case the total Hamiltonian has
purely discrete spectrum; 2). the total Hamiltonian must also have some 
continuous spectrum since otherwise the ergodic current would be identically
zero. The correct way out is to fix the time $t$, define the expectations at finite
leads and afterwards make them infinitely long. Only {\it after} the
thermodynamic limit we can let $t$ to go to infinity. More than that, the total density matrix 
is not the good object to work with, and any formal perturbative
expansions in $\tau$ at $t=\infty$ {\it before} the thermodynamic
limit has no clear mathematical meaning.  

In this section unless otherwise stated the leads are assumed to be of finite length $\Lambda$. But for the simplicity
of writing we omit the label $\Lambda$ on the leads' Hamiltonian. In order to get an expansion of the 
current in powers of the tunneling Hamiltonian we define $W(t)=e^{itH_0}e^{-itH}$, verifying the equation: 
\begin{equation}\label{doubleW}
i\dot W(t)={\tilde H}_T(t)W(t), \qquad W(0)=1, \qquad {\tilde H}_T(t):=e^{itH_0}H_Te^{-itH_0}. 
\end{equation}
Then the solution is:
\begin{eqnarray}\nonumber
W(t)&=&1-i\int_{t_0}^t ds {\tilde H}_T(s)W(s)\\\label{WS} 
&=&1+\sum_{k\geq 1}(-i)^k\int_{t_0}^t ds_1\int_{t_0}^{s_1}ds_2...
\int_{t_0}^{s_{k-1}}ds_k{\tilde H}_T(s_1){\tilde H}_T(s_2)...{\tilde H}_T(s_k).
\end{eqnarray} 
Using the ciclicity of the trace and the definition of $W(t)$ one rewrites Eq.(\ref{currTDL}) as follows:
\begin{equation}\label{JTD}
\langle J_{\alpha}(t) \rangle_{{\rm ref}} =\lim_{\Lambda\to\infty}{\rm Tr}_{{\cal F}}
\{\rho_0^{(\Lambda)} W^*(t){\tilde J}_{\alpha}(t)W(t) \},\quad {\tilde J}_\alpha(t):=e^{itH_0}J_\alpha e^{-itH_0}.
\end{equation}
It is clear that by replacing $W(t)$ as given by Eq.(\ref{WS}) in Eq.(\ref{JTD}) one obtains a full expansion 
of the current w.r.t the tunneling Hamiltonian $H_T$. Our strategy is to show that one can perform the 
thermodynamic limit on each term in this expansion. Let us make a few remarks on the structure of these terms 
and give the main steps we follow for calculating them. 

i) Given the structure of $W(t)$ and $H_T$ the current will be a series of monomials containing combinations 
of creation/annihilation operators from both the leads and the sample. However, due to the particular tensor product 
form of $\rho_0^{(\Lambda)}$, the particle number conservation 
requires that in all monomials with a non-vanishing contribution to the trace, the number of creation operators should equal the number 
of annihilation operators separately for the sample, and for each lead. It also means that each such monomial 
contains an odd number of $H_T$'s and is of even order in $\tau$ since the current operator itself is proportional with $\tau$.      

ii) In order to simlify notation, we write $a^{\#}(x)$ instead of $a^{\#}(|x\rangle)$, that is we identify the site $x$ with the basis vector 
$|x\rangle$. We deal with the operators acting on ${\cal F}_S$ by systematically inserting the projections of 
many-body states $\{|\nu \rangle\langle \nu| \}$ between any two $H_T$'s. 
Using the matrix elements $A_{\nu\nu'}$ introduced above (see Eq.(\ref{Anu})) and the shorthand notation 
${\tilde a}^{\#}_t(x)=e^{itH_0}a^{\#}(x)e^{-itH_0} $ one has for example:
\begin{equation}
\langle \nu_0,{\tilde H}_T(s_k) \nu_k\rangle =\sum_{\alpha_k}e^{is_k(E_{\nu_0}-E{\nu_k})}
\left [A_{\nu_0\nu_k}(m_{\alpha_k}){\tilde a}^*_{s_k}(0_{\alpha_k})
+A^*_{\nu_0\nu_k}{\tilde a}_{s_k}(0_{\alpha_k}) \right ]. 
\end{equation} 
Note that $A_{\nu\nu'}$ couples many-body states whose particle number differ by at most one. Also, 
$A_{\nu\nu'}$ does not depend on $\Lambda$, thus the thermodynamic
limit is only relevant for terms of the type: 
$${\rm Tr}_{{\cal F}_L}\{ \rho^{(\Lambda)}_L{\tilde a}^{\#_1}_{s_1}(0_{\alpha_1})...{\tilde a}^{\#_{2N}}_{s_{2N}}
(0_{\alpha_{2N}})  \}.$$

iii) Next we change the representation of the operators using the eigenstates $\varphi_{q_{\alpha_k}}$ of the leads' Hamiltonian:
\begin{equation}\label{aq}
{\tilde a}^{\#_k}_{s_k}(0_{\alpha_k})=\sum_{q_{\alpha_k}}e^{is_k\theta_{\#_k}\varepsilon(q_{\alpha_k})}
\varphi^{\#_k}_{q_{\alpha_k}}(0_{\alpha_k})a^{\#_k}_{q_{\alpha_k}},  
\end{equation}
where we introduced the notations:
\begin{eqnarray}\label{cor1}
\theta_{\#_k}=\left\lbrace \begin{array}{ccc}
+  \quad {\rm for} \quad a^*_{q_{\alpha_k}}, \\
-\quad {\rm for}  \quad  a_{q_{\alpha_k}}.
\end{array}\right.\quad
\varphi^{\#_k}_{q_{\alpha_k}}(0_{\alpha_k})=\left\lbrace \begin{array}{ccc} 
\overline {\varphi (0_{\alpha_k})} \qquad {\rm for}  \quad a^*_{q_{\alpha_k}}, \\
\varphi (0_{\alpha_k}) \qquad {\rm for}  \quad a_{q_{\alpha_k}},
\end{array}\right.
\end{eqnarray}
and $\varphi (0_{\alpha})=\langle \varphi_{q_{\alpha}},0_{\alpha} \rangle$. 
The general term on which one should perform the thermodynamic limit reads as follows: 
\begin{equation}\label{2Nterm}
\sum_{\vec\alpha}\sum_{ q_{\vec\alpha}}\sum_{ \vec\# }e^{i\theta_{\#_1}s_1
\varepsilon(q_{\alpha_1})+...+i\theta_{\#_{2N}}s_{2N}\varepsilon(q_{\alpha_{2N}})}
\varphi^{\#_1}_{q_{\alpha_1}} (0_{\alpha_1})..\varphi^{\#_{2N}}_{q_{\alpha_k}}   (0_{\alpha_{2N}}) {\rm Tr}_{{\cal F}_L}
\{\rho_L^{(\Lambda)} a^{\#_1}_{q_{\alpha_1}}..a^{\#_{2N}}_{q_{\alpha_{2N}}} \},
\end{equation}
where in the trace above there are precisely $N$ creation and $N$
annihilation operators from the leads. We introduced the 
shorthand notations $\vec\alpha := (\alpha_1,..,\alpha_{2N})$, 
$ q_{\vec\alpha}:= q_{\alpha_1},..,q_{\alpha_{2N}}$ and $ \vec\#:=\#_1,..\#_{2N}$ . The trace is further
calculated using the Wick theorem (see \cite{Fetter}) which holds because the leads are noninteracting. 
The idea behind the Wick procedure is to systematically use the anticommutation relations in order to 
reduce the monomial of order $2N$ to a sum of monomials of order $2N-2$. The simplest case corresponds 
to all six combinations for $N=2$. For example:
\begin{eqnarray}\nonumber
{\rm Tr}_{{\cal F_L}}\{\rho_L^{(\Lambda)} a^*_{q_{\alpha_1}}a^*_{q_{\alpha_2}}a_{q_{\alpha_3}}a_{q_{\alpha_4}}\}
&=&-\delta_{q_{\alpha_1}q_{\alpha_3}}\delta_{q_{\alpha_2}q_{\alpha_4}}f_{\alpha_1}(\varepsilon_{q_{\alpha_1}})
f_{\alpha_2}(\varepsilon_{q_{\alpha_2}})\\\label{Wick1} 
&+&\delta_{q_{\alpha_1}q_{\alpha_4}}\delta_{q_{\alpha_2}q_{\alpha_3}}f_{\alpha_1}(\varepsilon_{q_{\alpha_1}})
f_{\alpha_2}(\varepsilon_{q_{\alpha_2}}),
\end{eqnarray} 
where we used the cyclicity of the trace, the identity
$a^*_{q_{\alpha_1}}\rho_L^{(\Lambda)}=e^{\beta \varepsilon_{q_{\alpha_1}}}\rho_La^*_{q_{\alpha_1}} $
and the well known fact 
${\rm Tr}_{{\cal F}_L}\{\rho_L^{(\Lambda)} a^*_{q_{\alpha}}a_{q_{\beta}}\}=\delta_{\alpha\beta}
f_{\alpha}(\varepsilon_{q_{\alpha}}) $, where $f_{\alpha}$ is the Fermi function associated to lead $\alpha$.
One can easily show that all allowed combinations of 4 operators can be expressed in terms of 
products $ff$, $\overline f f$ and $\overline f \;  \overline f$, where $\overline f =1-f$.
Also, it is important to observe that due to the Kronecker symbols the sums over $q$'s are reduced and
one actually obtains products of terms which are of the following type:
\begin{align}\label{aprilie17}
\sum_{q_{\beta}}e^{\pm i(s-s')\varepsilon_{q_{\beta}}} f_{\beta}(\varepsilon_{q_{\beta}}) \langle 0_{\beta},\varphi_{q_{\beta}}\rangle
\langle \varphi_{q_{\beta}},0_{\beta} \rangle &=\langle 0_{\beta}, e^{\pm i(s-s')h_{\beta}^{(\Lambda)}}f_{\beta}(h_{\beta}^{(\Lambda)})0_{\beta} \rangle , \\
\sum_{q_{\beta}}e^{\pm i(s-s')\varepsilon_{q_{\beta}}} \langle 0_{\beta},\varphi_{q_{\beta}}\rangle
\langle \varphi_{q_{\beta}},0_{\beta} \rangle &=\langle 0_{\beta}, e^{\pm i(s-s')h_{\beta}^{(\Lambda)}}0_{\beta} \rangle .\nonumber
\end{align}
The second term appears from combinations containing $\overline f \;  \overline f$. 

For terms of higher order one proceeds in a similar way using the general formula (see Eq.(24.36)) in Ref.\cite{Fetter}:
\begin{align}\nonumber
{\rm Tr}_{{\cal F}_L}\{\rho_L^{(\Lambda)} a^{\#_1}_{q_{\alpha_1}}..a^{\#_{2N}}_{q_{\alpha_{2N}}} \}&=
\{a^{\#_1}_{q_{\alpha_1}},a^{\#_2}_{q_{\alpha_2}} \}_+f^{\#_1}(\varepsilon_{q_{\alpha_1}})
{\rm Tr}_{{\cal F}_L}\{\rho_L^{(\Lambda)}a^{\#_3}_{q_{\alpha_3}} a^{\#_4}_{q_{\alpha_4}}..a^{\#_{2N}}_{q_{\alpha_{2N}}} \} \\\nonumber
&-\{a^{\#_1}_{q_{\alpha_1}},a^{\#_3}_{q_{\alpha_3}} \}_+f^{\#_1}(\varepsilon_{q_{\alpha_1}})
{\rm Tr}_{{\cal F}_L}\{\rho_L^{(\Lambda)}a^{\#_2}_{q_{\alpha_2}}a^{\#_4}_{q_{\alpha_4}} ..a^{\#_{2N}}_{q_{\alpha_{2N}}} \}+... \\\label{Wick2}
&+\{a^{\#_1}_{q_{\alpha_1}},a^{\#_{2N}}_{q_{\alpha_{2N}}} \}_+f^{\#_1}(\varepsilon_{q_{\alpha_1}})
{\rm Tr}_{{\cal F}_L}\{\rho_L^{(\Lambda)} a^{\#_2}_{q_{\alpha_2}}a^{\#_3}_{q_{\alpha_3}}..a^{\#_{2N-1}}_{q_{\alpha_{2N-1}}} \}.
\end{align}

Thus we have shown that the thermodynamic limit is to be performed
only on factors
like in \eqref{aprilie17}. We give this result as a general lemma:
\begin{lemma}\label{termoman}
Let $\mathcal{N}_\Lambda$ be the set $\{0,1,\dots,\Lambda\}$ with $\Lambda\leq
\infty$. Let $h_\infty$ be the discrete Laplace operator on the
halfline $\mathcal{N}_\infty$
with Dirichlet boundary condition at $-1$, and $h_\Lambda$ is the
restriction of $h_\infty$ on $\mathcal{N}_\Lambda$ with Dirichlet
conditions at $-1$ and $\Lambda+1$. Let $F$ be any continuous function
defined on the interval $[-2t_L,2t_L]$. Then we have:
\begin{equation}\label{aprilie171}
\lim_{\Lambda\to \infty} \langle 0,F(h_\Lambda)0\rangle =
\langle 0,F(h_\infty)0\rangle =\frac{1}{\pi
  t_L}\int_{-2t_L}^{2t_L}\sqrt{1-\frac{E^2}{4t_L^2}}\; F(E)\; dE.
\end{equation}
\end{lemma}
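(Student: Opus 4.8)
The plan is to handle the two asserted equalities separately. The second one, $\langle 0,F(h_\infty)0\rangle=\frac{1}{\pi t_L}\int_{-2t_L}^{2t_L}\sqrt{1-E^2/(4t_L^2)}\,F(E)\,dE$, is a statement purely about $h_\infty$, and I would obtain it by diagonalizing $h_\infty$ explicitly. The discrete sine transform $(Uf)(q):=\sum_{j\geq 0}f(j)\,\sqrt{2/\pi}\,\sin[(j+1)q]$ is a unitary map from $l^2(\mathbb{N})$ onto $L^2((0,\pi),dq)$, and the three term identity $\sin[(j+2)q]+\sin[jq]=2\cos q\,\sin[(j+1)q]$, together with the Dirichlet condition encoded by $\sin[0\cdot q]=0$, shows that $U$ conjugates $h_\infty$ into multiplication by $2t_L\cos q$. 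Since $U|0\rangle$ is the function $\sqrt{2/\pi}\,\sin q$, the spectral theorem gives $\langle 0,F(h_\infty)0\rangle=\frac{2}{\pi}\int_0^\pi F(2t_L\cos q)\,\sin^2 q\,dq$, and the substitution $E=2t_L\cos q$ (so $\sin q=\sqrt{1-E^2/(4t_L^2)}$ and $dE=-2t_L\sin q\,dq$) yields the claimed formula. This is consistent with the expression for $|\varphi_E^{\gamma}(0)|^2$ in \eqref{semiinf}, the latter being precisely the density of the spectral measure of $h_\infty$ at $|0\rangle$.

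For the first equality I would begin with the uniform bound $\sigma(h_\Lambda)\subseteq[-2t_L,2t_L]$, valid for every $\Lambda\leq\infty$, since each off-diagonal entry equals $t_L$ and every row of $h_\Lambda$ has absolute sum at most $2t_L$. Hence $F(h_\Lambda)$ is well defined by continuous functional calculus and $\|F(h_\Lambda)\|\leq\sup_{[-2t_L,2t_L]}|F|$. By the Weierstrass approximation theorem it then suffices to prove $\langle 0,P(h_\Lambda)0\rangle\to\langle 0,P(h_\infty)0\rangle$ for polynomials $P$, the passage to arbitrary continuous $F$ being the standard $3\varepsilon$-type estimate that uses the above norm bound uniformly in $\Lambda$; by linearity one may further take $P(E)=E^n$.

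For a monomial the convergence is in fact an eventual equality. The number $\langle 0,h_\Lambda^n 0\rangle$ equals $t_L^n$ times the number of nearest-neighbour walks of length $n$ on $\mathcal{N}_\Lambda$ that start and end at the site $0$; such a walk can reach at most the site $\lfloor n/2\rfloor$, so for $\Lambda\geq n$ it never touches the right boundary at $\Lambda+1$, and since the Dirichlet condition at $-1$ is imposed on both operators the admissible walk sets for $h_\Lambda$ and for $h_\infty$ coincide. Therefore $\langle 0,h_\Lambda^n 0\rangle=\langle 0,h_\infty^n 0\rangle$ for all $\Lambda\geq n$, which together with the previous paragraph yields both the existence of the thermodynamic limit and the two asserted identities.

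There is no serious obstacle here; the only point requiring a little care is the interchange of the limit $\Lambda\to\infty$ with the continuous functional calculus for a merely continuous $F$, which is exactly what the combination of the uniform spectral bound and Weierstrass approximation above takes care of (alternatively one may invoke strong resolvent convergence of $h_\Lambda$ to $h_\infty$). If one prefers to avoid the explicit sine-transform diagonalization, an equivalent route is to compute $\langle 0,h_\infty^n 0\rangle$ combinatorially as a weighted count of Dyck-type paths, obtaining a Catalan-number expression, and to check by direct integration that these coincide with the moments $\frac{1}{\pi t_L}\int_{-2t_L}^{2t_L}\sqrt{1-E^2/(4t_L^2)}\,E^n\,dE$; both routes give the same conclusion, the diagonalization being the shorter one.
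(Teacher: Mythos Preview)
Your proof is correct and follows essentially the same route as the paper: Weierstrass approximation reduces the first equality to monomials, and for monomials one has the eventual equality $\langle 0,h_\Lambda^n 0\rangle=\langle 0,h_\infty^n 0\rangle$ for $\Lambda\geq n$ because the vector $h_\infty^k|0\rangle$ is supported in $\{0,\dots,k\}$ and hence never feels the right boundary---your walk-counting formulation is exactly the paper's ``cannot reach the other boundary after less than $\Lambda$ steps''. The paper's proof stops there and tacitly relies on \eqref{semiinf} for the second equality, whereas you supply the explicit sine-transform diagonalization; this is a harmless addition rather than a different approach.
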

\begin{proof} Fix 
  $\epsilon>0$. The spectrum of all $h_\Lambda$'s is contained in
  $[-2t_L,2t_L]$. Because $F$ is continuous on this interval, it can
  be uniformly approximated with polynomials. The Weierstrass
  approximation theorem says that there exists a polynomial
  $P_\epsilon(x)=\sum_{j=0}^{N}a_jx^j$ such that  
\begin{equation}\label{aprilie172}
||F-P_\epsilon||_\infty:=\sup_{x\in [-2t_L,2t_L]}|F(x)-P_\epsilon(x)|\leq \epsilon/3.
\end{equation}
The spectral theorem implies that 
$||F(A)-P_\epsilon(A)||=||F-P_\epsilon||_\infty$ for any self-adjoint
operator $A$ whose spectrum lies in $[-2t_L,2t_L]$. Thus we can write:
\begin{equation}\label{aprilie173}
|\langle 0,F(h_\infty)0\rangle -\langle
0,P_\epsilon(h_\infty)0\rangle|\leq \epsilon/3,\quad |\langle
0,F(h_\Lambda)0\rangle -\langle
0,P_\epsilon(h_\Lambda)0\rangle|\leq \epsilon/3,\quad \forall \Lambda
\geq 1.
\end{equation}
It is very important to note that the above estimate holds true
uniformly in $\Lambda$. Now let us remark that there exists
$\Lambda_\epsilon$ sufficiently large such that 
\begin{equation}\label{aprilie174}
\langle
0,P_\epsilon(h_\infty)0\rangle =\langle
0,P_\epsilon(h_\Lambda)0\rangle,\quad \forall \Lambda\geq \Lambda_\epsilon.
\end{equation}
The explanation is that $h_\Lambda^k|0\rangle=h_\infty^k|0\rangle$ if
$k\leq \Lambda$, because we cannot reach the 'other' boundary after
less than $\Lambda$ steps. Thus choosing $\Lambda_\epsilon$ larger
than the degree of $P_\epsilon$ is sufficient to conclude that 
$P_\epsilon(h_\Lambda)|0\rangle=P_\epsilon(h_\infty)|0\rangle$. Now
using \eqref{aprilie173} and \eqref{aprilie174} we have:
$$|\langle 0,F(h_\Lambda)0\rangle -
\langle 0,F(h_\infty)0\rangle|\leq 2\epsilon/3<\epsilon,\quad \forall
\Lambda\geq \Lambda_\epsilon, $$
and the proof is over.
\end{proof}

\vspace{0.5cm}

After applying the Lebesgue dominated convergence theorem on the iterated
integrals of \eqref{JTD}, we arrive after some work 
at a rough estimate of the form: 
\begin{equation}\label{aprilie175}
|I_{\alpha,t}(\tau)|\leq \sum_{n\geq 1}\tau^{2n}C^{2n}\frac{t^{2n}}{(2n)!},
\end{equation}
where $C$ is some positive constant. Thus $I_{\alpha,t}(\cdot)$ is entire in $\tau$. But this estimate only
says that the transient current cannot grow faster than an exponential of the type
$\tau^2 e^{C\tau t}$, which is not very useful if $t$ is large. But at least if $\eta$ is
chosen such that $\tau/\eta$ is small enough, then \eqref{Jeta} holds
true.

\section{Off-resonant transport}\label{offresonant}

Before starting our calculations we review the Landauer formula for non-interacting electrons \cite{CJM1} 
which was proved to give the steady-state current both for discrete and continuous models at arbitrary bias 
\cite{Aschb, Nenciu,CDNP-1}.
The reason to make some connection between the Landauer formula and our results is twofold. On one hand any 
calculation in the interacting case should lead to this formula when the interaction strength $U$ is set back to zero. 
On the other hand one can get some general facts about the expansion of current in powers of the lead-dot tunneling $\tau$.
The Landauer formula gives the steady-state current in the lead $\alpha$:
\begin{equation}\label{LB}
I_{\alpha,\infty}(\tau)=\sum_{\gamma}\int_{-2t_L}^{2t_L} dE (f_{\alpha}(E)-f_{\gamma}(E))|{\cal T}_{\alpha\gamma}(E)|^2,
\end{equation}
where the transmittance ${\cal T}_{\alpha\gamma}(E)$ is defined as follows (see \cite{CJM1}):
\begin{equation}\label{calT}
{\cal T}_{\alpha\gamma}(E)=\frac{\tau^2}{\pi t_L}
  \sqrt{1-\frac{E^2}{4t_L^2}}\left\langle m_{\alpha},\left
  (h_S-E-\frac{\tau^2}{t_L}\zeta_1^+(E)\Pi_T\right )^{-1}m_{\gamma}\right\rangle,
\end{equation}
where we introduced the orthogonal projection on the contact sites $m_{\beta}$,
$\Pi_T:=\sum_{\beta} |m_{\beta}\rangle\langle m_{\beta}| $ and
$\zeta_1(z)={\zeta}_+(z)$ if ${\rm Im}(z)>0$, $\zeta_1(z)={\zeta}_-(z)$ if ${\rm Im}(z)<0$, where:
\begin{equation}\label{matqa3}
\zeta_\pm(z)=\frac{z}{2t_L} \mp i\sqrt{1-z^2/(4t_L^2)},\quad z\not\in
((-\infty,-2t_L]\cup [2t_L,\infty)). 
\end{equation}
If all eigenvalues 
$e_{\lambda}$ of $h_S$ are far away from the 
spectrum of the leads, then the ergodic current becomes analytic near
$\tau=0$ and the leading term is of order $\tau^4$ and coincides with
\eqref{elastycu}. 

In contrast, if some eigenvalue
$e_{\lambda}$ of the sample is inside $(-2t_L,2t_L)$, the ergodic current has a completely different behavior with $\tau$. For simplicity, assume that 
 all other eigenvalues of $h_S$ are outside $[-2t_L,2t_L]$, while $e_{\lambda}$ is non-degenerate and corresponds to an eigenvector $\phi$, i.e. 
$h_S\phi=e_{\lambda}\phi$. Then following \cite{CJM1} one can prove:
\begin{align}\label{LB33}
&I_{\alpha,\infty}(\tau)\nonumber \\
&=\frac{\tau^4}{\pi^2 t_L^2}\sum_{\gamma}\int_{-2t_L}^{2t_L} dE 
|\phi(m_{\gamma})|^2|\phi(m_{\alpha})|^2(f_{\alpha}(E)-f_{\gamma}(E))\frac{1-\frac{E^2}{4t_L^2}}{|E-e_\lambda 
+\frac{\tau^2}{t_L}\zeta_1^+(E)\langle \phi,\Pi_T\phi\rangle|^2 }+\mathcal{O}(\tau^4)\nonumber \\
&=\tau^2\left (C(e_\lambda,t_L) \sum_\gamma [f_{\alpha}(e_\lambda)-f_{\gamma}(e_\lambda)]\frac{ |\phi(m_{\gamma})|^2|\phi(m_{\alpha})|^2}{  
\sum_\beta |\phi(m_{\beta})|^2}   +\mathcal{O}(1) \right ),
\end{align}
where $C(e_\lambda,t_L)$ is some constant. It is clear that this expression has nothing in common with \eqref{seq}, which only contains $f_\alpha$ and not 
differences of Fermi functions. 

\subsection{Proof of {\rm (ii)}: Sequential tunneling contribution}

In this Section we calculate the first two contributions to the steady-state current, that is the terms of 
order two and four in the transfer Hamiltonian. Using the identity:
\begin{equation}
e^{-itH}e^{itH_0}=1-i\int_0^t ds e^{-isH}H_Te^{isH_0}
\end{equation}
and expanding the unitary evolution $e^{-itH} $ up to the 2nd order in $H_T$ one gets from
\eqref{JTD} and (\ref{Jeta}):
\begin{equation}
C_{\alpha,2}(\eta)= i\eta\int_0^{\infty}dt\; e^{-\eta t}\int_0^t ds 
\langle [\tilde H_T(-s),J_{\alpha}] \rangle_{{\rm ref}}.
\end{equation}
By replacing $J_{\alpha}$ and $H_T$ one arrives after straightforward calculation at the following expression:
\begin{eqnarray}\nonumber
{\rm Tr}_{\cal F}\{\rho_0^{(\Lambda)} [\tilde H_T(-s),J_{\alpha}]\} &=&\tau^2\sum_{\nu} 
\sum_{q_{\alpha}} |\varphi_{q_{\alpha}}(0_{\alpha})|^2 
|A_{\nu\nu_0 }|^2(1-f_{\alpha}(\varepsilon_{q_{\alpha}})
( e^{is(E_{\nu}-E_{\nu_0}+\varepsilon_{q_{\alpha}})} + c.c ) \\\nonumber
&-&\tau^2\sum_{\nu}\sum_{q_{\alpha}} |\varphi_{q_{\alpha}}(0_{\alpha})|^2 |A^*_{\nu\nu_0 }|^2
f_{\alpha}(\varepsilon_{q_{\alpha}})
( e^{is(E_{\nu}-E_{\nu_0}-\varepsilon_{q_{\alpha}})} + c.c ) \\\nonumber 
&=&\tau^2\sum_{\nu} |A_{\nu\nu_0 }|^2 (e^{is(E_{\nu}-E_{\nu_0})}
\langle 0_{\alpha},e^{ish_L}(1-f (h_L))0_{\alpha}\rangle + c.c  ) \\\label{Jseq}
&-&\tau^2\sum_{\nu}|A^*_{\nu\nu_0 }|^2 (e^{is(E_{\nu}-E_{\nu_0})}\langle 0_{\alpha},e^{-ish_L}f (h_L)0_{\alpha}\rangle  + c.c).
\end{eqnarray}
In the thermodynamic limit one has:
\begin{equation}
\langle 0_{\alpha},e^{-ish_L}f (h_L)0_{\alpha}\rangle =\int_{-2t_L}^{2t_L}dE |\varphi^{\alpha}_E(0_{\alpha})|^2
e^{-isE}f_{\alpha}(E),
\end{equation}
where $\varphi^{\alpha}_E$ denotes the generalized eigenfunction of the semiinfinite lead corresponding to energy $E$ (see \eqref{semiinf}).  
By performing the time integral over $s$ the contribution of order $\tau^2$ to the transient current is obtained as:
\begin{equation}\label{J2trans}
 2\tau^2 \sum_{\nu} \int_{-2t_L}^{2t_L}dE |\varphi^{\alpha}_{E}(0_{\alpha})|^2
\left ( |A_{\nu\nu_0 }|^2(1-f_{\alpha})\frac{\sin(\Delta^+_{\nu\nu_0}t)}{\Delta^+_{\nu\nu_0}}
-f_{\alpha}|A^*_{\nu\nu_0 }|^2\frac{\sin(\Delta^-_{\nu\nu_0}t)}{\Delta^+_{\nu\nu_0}}\right),
\end{equation} 
where for simplicity we omitted to write the energy dependence of the Fermi functions and we introduced the notations: 
\begin{equation}
\Delta^{\pm}_{\nu\nu_0}(E):=E_{\nu}-E_{\nu_0}\pm E
\end{equation}
Then we perform the final time integral and use the identity:
\begin{equation}\label{eta}
\lim_{\eta\to 0}\eta\int_0^{\infty}dt e^{-\eta t} \frac{e^{it\Delta^{\pm}_{\nu\nu_0}}-e^{-it\Delta^{\pm}_{\nu\nu_0}}}
{\Delta^{\pm}_{\nu\nu_0}}=2\pi\delta (\Delta^{\pm}_{\nu\nu_0}).
\end{equation}
to arrive at Eq.(\ref{seq}).

\subsection{Proof of {\rm (iii)}: Cotunneling}
The 4th order contribution to the current follows from the expansion of the unitary evolution up to the 3rd order in 
the tunneling operator $H_T$: 
\begin{align}\nonumber
&\tau^4C_{\alpha,4}(\eta)=i\int_0^tds_1\int_0^{s_1}ds_2\int_0^tds'_1\langle e^{-is_1H_0}H_Te^{is_2H_0}H_Te^{-is_2H_0}e^{is_1H_0}J_{\alpha}
e^{-is'_1H_0}H_Te^{is'_1H_0}\rangle_{{\rm ref}}
\\\label{J4}
&-i\int_0^tds_1\int_0^{s_1}ds_2\int_0^{s_2}ds_3\langle e^{-is_1H_0}H_Te^{is_3H_0}H_Te^{-is_3H_0}e^{is_2H_0}H_T
e^{-is_2H_0}e^{is_1H_0}J_{\alpha}\rangle_{{\rm ref}}+ c.c.
\end{align}
In order to achieve a more explicit form of $C_{\alpha,4}(\eta)$ we follow the same steps as in the 
proof of the thermodynamic limit, that is we insert the many-body states of $H_S$ in order to deal with the 
operators acting on ${\cal F}_S$, then we switch to the proper basis of $H_L$  and finally use the Wick theorem for 
all non-vanishing combinations of the type ${\rm Tr} \{a^{\#}_{q_1}a^{\#}_{q_2}a^{\#}_{q_3}a^{\#}_{q_4}\}$. The
calculations are tedious but straightforward. We find that there are 48 terms contributing to the cotunneling current. 
At the next step we perform the time integrals.  It is sufficient to calculate the real part of this integrals because 24 terms are the complex 
conjugates of the remaining ones.  Moreover, one notes that there are only two types of
integrals:
\begin{eqnarray}\nonumber 
B_1(t)&=&\int_0^tds_1\int_0^{s_1}ds_2\int_0^tds_3 \cos(s_1x+s_2y+s_3z)\\\nonumber
&=&\frac{1}{zy(x+y)}( \sin tz -\sin t(x+y+z)+\sin t(x+y)) \\\label{I1}
&+&\frac{1}{xyz}(\sin t(x+z)- \sin tz + \sin tx),\\\nonumber
B_2(t)&=&
\int_0^tds_1\int_0^{s_1}ds_2\int_0^{s_2}ds_3 \cos(s_1x+s_2y+s_3z)\\\label{I2}
&=&\frac{\sin t(x+y)}{zy(x+y)}+\frac{\sin tx}{zx(z+y)}-\frac{\sin tx}{xyz}-\frac{\sin t(x+y+z)}{z(y+z)(x+y+z)},
\end{eqnarray}
 where $x,y,z$ contain two many-body energies of $H_S$ and energy of one or two electrons from the leads
(an example is $x=E_{\nu''}-E_{\nu_0}+\varepsilon_{q_1}$, $y=E_{\nu'}-E_{\nu''}-\varepsilon_{q_2}$, 
$z=E_{\nu}-E_{\nu'}+\varepsilon_{q_2} $).
Then one has to perform the thermodynamic limit, to calculate the integral over time and take 
the limit $\eta\to 0$. This final step brings in plenty of delta functions. Our first off-resonant condition was that 
$E_{\nu}-E_{\nu'}-\varepsilon_{q}\neq 0$ if the number of electrons in the 
MBS $|\nu\rangle,|\nu'\rangle$ differ by one.  Our second off-resonant condition implies that 
$E_{\nu}-E_{\nu'}\pm (\varepsilon_{q_1}+\varepsilon_{q_2})\neq 0$ , for any pair of many-body energies 
$E_{\nu},E_{\nu}$ whose particle numbers differ by two. By analyzing all combinations of $x,y,z$ 
it follows that the remaining off-resonant terms arise from $\delta(x)/zy$ for $B_1$ and from $\delta(x+y)/zy$ for $B_2$.
In these terms the delta functions impose conditions of the form 
$E_{\nu'}-E_{\nu_0}+\varepsilon_{q_1}-\varepsilon_{q_2}=0$, which means that the dot initially in the state 
$|\nu_0 \rangle $ passes to the state $|\nu ' \rangle $ by exchanging {\it two} electrons with the leads. 
This process is called {\it cotunneling} in the physical literature, 
because the electrons now tunnel pairwise. 
After collecting all these terms and taking advantage of some cancelations one arrives at
the final expression for the cotunneling current given by Eq.(\ref{cot1}) of the theorem.

Let us make a few remarks on the cotunneling current. From the sequence of $A$'s appearing Eq.(\ref{J4fin}) 
one observes that the cotunneling processes always imply different leads.
Take for example the 3rd term. It describes the following sequence: an electron with energy $E$ enters the dot 
from the lead $\alpha$, while the second electron of energy $E'=E-E_{\nu '}+E_{\nu_0}$ leaves the 
dot to lead $\gamma$. The remaining two terms described the reverse process: the electron tunnels back from 
the lead $\gamma$ and the second one tunnels out to lead $\alpha$. The other terms can be described in a similar way.  
Also note that the cotunneling contributions explicitely contain the initial state of sample $\nu_0$. 

A natural question is what we can say about the cotunneling current in the non-interacting case.
Let us recall here that $e_{\lambda}$ are the eigenvalues of $h_S$, i.e $h_S\phi_{\lambda}=e_{\lambda}\phi_{\lambda}$.
Then the operators $a^{\#}(|m_{\alpha}\rangle)$ and $a^{\#}(|m_{\gamma}\rangle)$ appearing in the coefficients $A$ 
in Eq.(\ref{cot1}) can be written in terms of $a^{\#}_\lambda:=a^{\#}(|\phi_\lambda\rangle)$. 
Moreover, the sums over the many-body states of $H_S$ allow one to recover the resolvent $(H_S-E_{\nu_0}-\varepsilon)^{-1}$
and also the Fermi-Dirac operator $f_{\alpha,\gamma}(H_S-E_{\nu_0}-\varepsilon)$. 
As an example we consider the 2nd term in Eq.(\ref{J4fin}. Introducing the notation 
${\tilde f}_{\gamma}(E):=\chi_L(E)f_{\gamma}(E)$ one has:
\begin{align}\nonumber
M_2&:=-\sum_{\nu,\nu',\nu''}  \frac{(1-f_{\alpha}(E)) {\tilde f}_{\gamma}(E+E_{\nu'}-E_{\nu_0} )}
{(E_{\nu}-E_{\nu_0}+E) (E_{\nu_0}-E_{\nu''}-E)}
A_{\nu_0\nu}^*(m_{\alpha})A_{\nu\nu'}(m_{\gamma})A_{\nu'\nu''}^*(m_{\gamma})A_{\nu''\nu_0}(m_{\alpha})\\\nonumber
&=\sum_{\lambda_1,\lambda_2,\lambda_3,\lambda_4}\langle \phi_{\lambda_1},m_{\alpha} \rangle 
\langle m_{\gamma}, \phi_{\lambda_2} \rangle 
\langle \phi_{\lambda_3},m_{\gamma} \rangle \langle m_{\alpha},\phi_{\lambda_4} \rangle (1-f_{\alpha}(E))
{\tilde f}_{\gamma}(E+E_{\nu'}-E_{\nu_0} ) \\\nonumber
&\times \langle \nu_0,a^*_{\lambda_1}(H_S-E_{\nu_0}-E)^{-1}a_{\lambda_2}f_{\gamma}(H_S-E_{\nu_0}-E)
a^*_{\lambda_3}(H_S-E_{\nu_0}-E)^{-1})a_{\lambda_4} \nu_0 \rangle \\\nonumber
&=\sum_{\lambda_1,\lambda_2,\lambda_3,\lambda_4}\langle \phi_{\lambda_1},m_{\alpha} \rangle 
\langle m_{\gamma}, \phi_{\lambda_2} \rangle 
\langle \phi_{\lambda_3},m_{\gamma} \rangle \langle m_{\alpha},\phi_{\lambda_4}\rangle 
\frac{(1-f_{\alpha}){\tilde f}_{\gamma}(E+e_{\lambda_2}-e_{\lambda_1})}{(E-e_{\lambda_4})(E-e_{\lambda_1})}\\
&\times {\rm Tr}_{{\cal F}_S}\{\rho_S a^*_{\lambda_1}a_{\lambda_2}a^*_{\lambda_3}a_{\lambda_4} \}.
\end{align}
In the above calculations we used pull-through identities like $(H_S-z)^{-1}a_{\lambda}=a_{\lambda}(H_S-z-e_{\lambda})^{-1}$
or $a^*_{\lambda}f(H_0-z)=f(H_0-z-e_{\lambda})a^*_{\lambda}$. Now the only thing we should do is to use the Wick theorem 
for the trace in the last line (the theorem now holds as the interaction is absent):
\begin{equation}
{\rm Tr}_{{\cal F}_S}\{\rho_S a^*_{\lambda_1}a_{\lambda_2}a^*_{\lambda_3}a_{\lambda_4} \}=
n_{\lambda_1}(\delta_{\lambda_1\lambda_2}\delta_{\lambda_3\lambda_4}n_{\lambda_3} 
+\delta_{\lambda_1\lambda_4}\delta_{\lambda_2\lambda_3}  (1-n_{\lambda_2}) ),
\end{equation} 
where $n_{\lambda}={\rm Tr}_{{\cal F}}\{\rho_S a^*_{\lambda}a_{\lambda} \}$ and 
$1-n_{\lambda}={\rm Tr}_{{\cal F}}\{\rho_S a_{\lambda} a^*_{\lambda} \}$. 

The remaining terms in Eq.(\ref{J4fin}) have to be manipulated in the same manner. Collecting all 
of them one observes that all products of Fermi functions vanish, and all factors like $\tilde{f}$ will only appear as $\tilde{f}(E)$ with 
$E\in [-2t_L,2t_L]$ which allows us to drop $\chi_L$. Then the cotunneling current takes the
following form (which does not depend on the particle number $n_{\lambda}$):

\begin{equation}\label{cotLB}
\tau^4C_{\alpha,4}(0_+)=\frac{\tau^4}{\pi^2t_L^2}\sum_{\gamma}\int_{-2t_L}^{2t_L} [1-E^2/(4t_L^2)]
|\langle m_{\alpha},(h_S-E)^{-1} m_{\gamma}\rangle |^2 (f_{\alpha}(E)-f_{\gamma}(E))\; dE.
\end{equation}
One recognizes at once the 1st term in the expansion of the Landauer formula \eqref{LB} w.r.t $\tau$ in the off-resonant case. So as expected, the 
off-resonant transport is still described by a Landauer formula in the non-interacting case.
As expected, in this case the steady-state current does not depend on the initial state of the sample.

\section{Numerical simulations of the transient regime}\label{numerical}

Let us consider the same two-site system as the one in Remark \ref{remarca244}. As we have already mentioned, one can numerically 
compute transients via the generalised
Master equation (GME) method \cite{MMG}. The main idea behind this method is to write down an equation for
the reduced density operator (RDO) $\rho_r(t):={\rm Tr}_{{\cal F}_L}\{\rho\}$. Note that  $\rho_r(t)$ only acts in the Fock space of the sample. 
Its derivative w.r.t time gives the evolution of the particle number in the sample which in turn is related to the currents flowing 
to and from the leads via the continuity equation. The method is usually formulated in terms of Liouvillians (see e.g. \cite{Timm} for 
relevant equations). Although the main regime considered in other papers is the resonant one, here we pay more attention to 
the off-resonant regime. We are motivated by the fact that in our paper the transient current due to sequential tunneling 
processes is given by a rather simple analytical formula Eq.(\ref{J2trans}), which should be the main contribution on a time scale of order $1/\tau$.

Moreover, since GME also works for the resonant case, it would be a proper tool to compare the
two-regimes. The off-resonant setup is achieved by taking a small hopping constant on the leads and by globally shifting the
leads' spectrum $\sigma (h_L)=[-2t_L+E_{{\rm shift}},2t_L+E_{{\rm shift}}]$. The bias window $[\mu_R,\mu_L]$ is also fixed such that all
the many-body states of the sample are below it. The time-dependent currents in the left (L) and right (R) leads are presented in
Fig.\, 1a.

\begin{figure}
\includegraphics[width=0.5\textwidth]{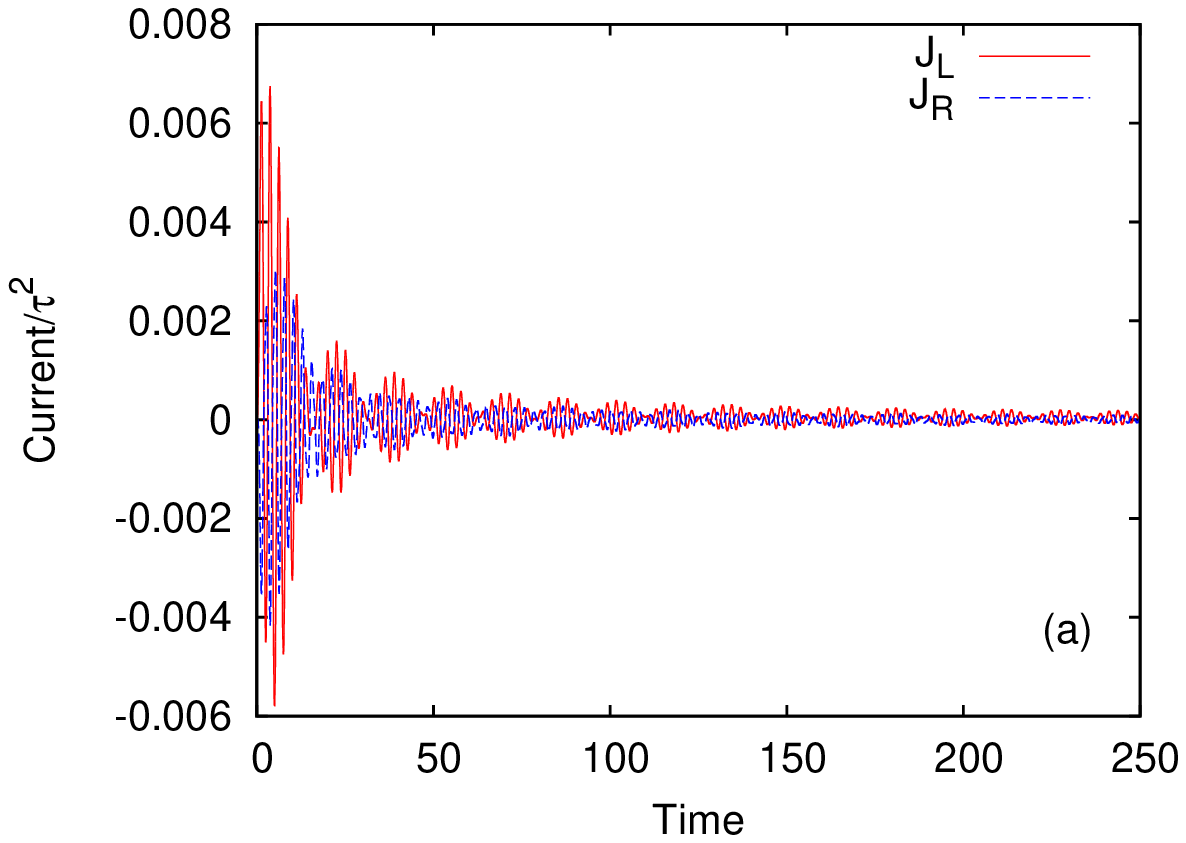}
\includegraphics[width=0.5\textwidth]{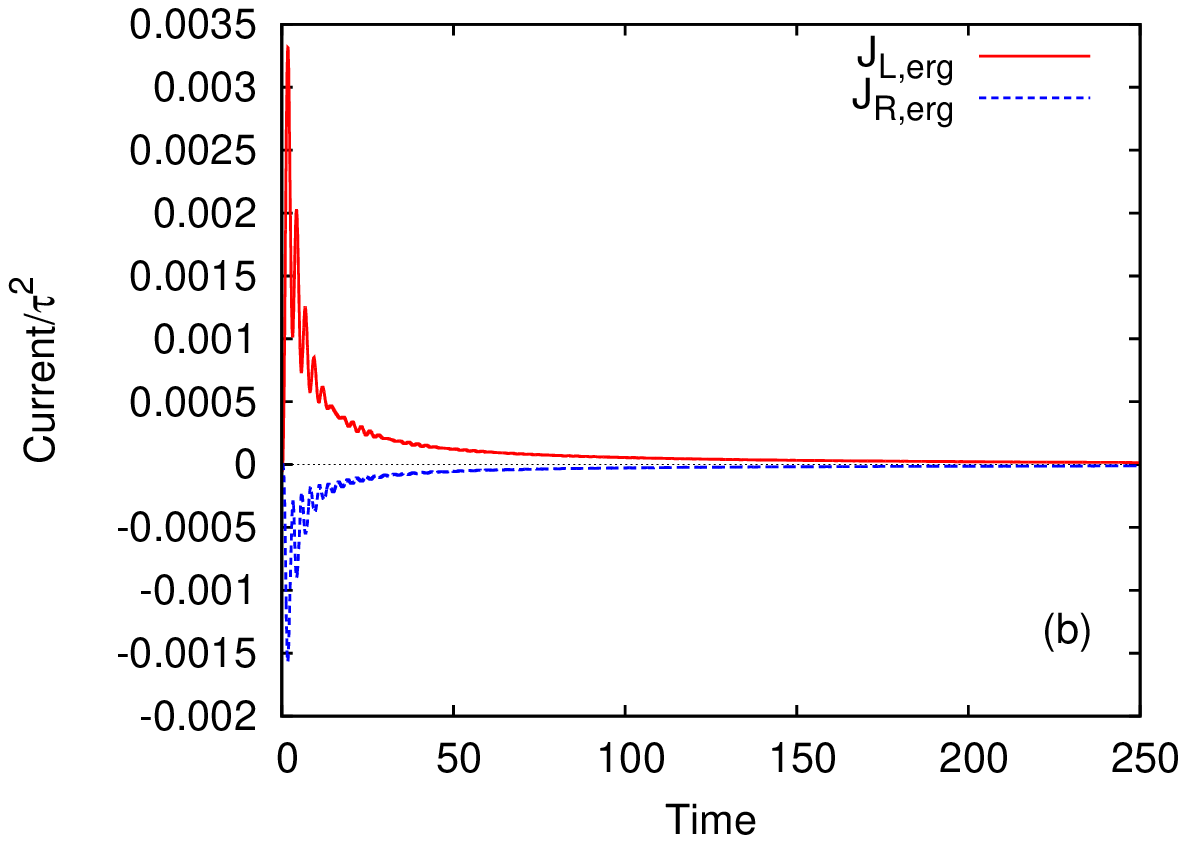}
\caption{(Color online) (a) The total transient currents $J_L$ and $J_R$ as a function of time in the off-resonant regime.
(b) The 'ergodic' currents.
Other parameters: $U=0.5$, $\tau=0.5$, $t_L=0.1$,  $E_{{\rm shift}}=6$, $\mu_L=7$, $\mu_R=6$.}
\label{fig1}
\end{figure}

The convention for the sign of the currents is as follows: $J_L$ is positive if it flows from the left lead towards the 
sample and $J_R>0$ if the current flows from the sample to the right lead. The steady-state regime thus implies 
$J_L(t)=J_R(t)$ for some $t$. Instead of this one notices that both currents exhibit modulated 
oscillations around zero and no steady-state is achieved, although the amplitude of the oscillations decreases in time. 
This behavior could be predicted by our analytical result (see Eq.(\ref{J2trans})).
However, if one performs the ergodic limit the results converges to zero in the long-time limit, as clearly seen in Fig.\,1b.

The transport in the resonant regime is shown in Fig.\,2a for two initial conditions of the isolated quantum dot 
$|\nu_0\rangle=|10\rangle $ and $|\nu_0\rangle=|00\rangle $. In this case we consider a larger $t_L$ and the bias window 
is chosen such that the first state of the dot is below it while the other ones within the bias window. Notice that 
in this case the parameters are set such that $\sigma (h_L)$ covers the entire spectrum of $h_S$. 
The transients are quite smooth and the steady state
is achieved around $t\sim 225$. In this case there is no need to consider the ergodic limit.

\begin{remark} In the resonant regime, the steady-state current {\it does not depend on the initial condition of the sample}. This has already been  
rigorously established both in the non-interacting case \cite{Aschb, CNZ} and for weakly interacting systems \cite{CMP}. 
\end{remark}

\begin{remark} If the quantum dot is initially empty, the current on the right lead starts by being negative, which means that this lead actually 
feeds as well the dot. Fig.\,2b  shows the charge that accumulates in time on the many-body states containing 
$n$ particles, and the total charge $n_{{\rm tot}}$ (the curves correspond to the initial condition $|\nu_0\rangle =|10\rangle $).
The reading of the numerical results is straightforward. The single particle state are depleted in favor of the two-particle
state $|11\rangle$. In the steady-state regime the latter contains in average one electron, because the state $|11\rangle$ contained 
within the bias window charges/discharges by back-and-forth tunneling of one electron from the leads.   
\end{remark}

\begin{figure}
\includegraphics[width=0.5\textwidth]{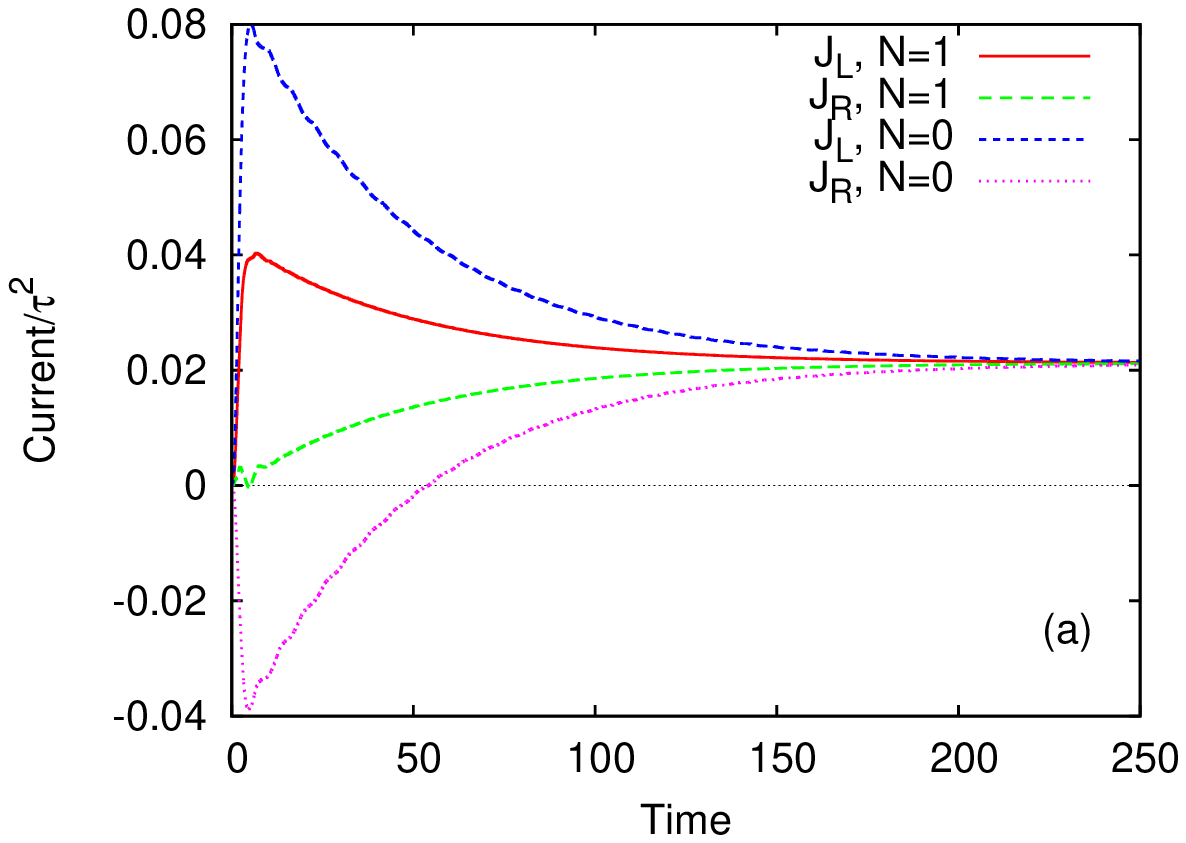}
\includegraphics[width=0.5\textwidth]{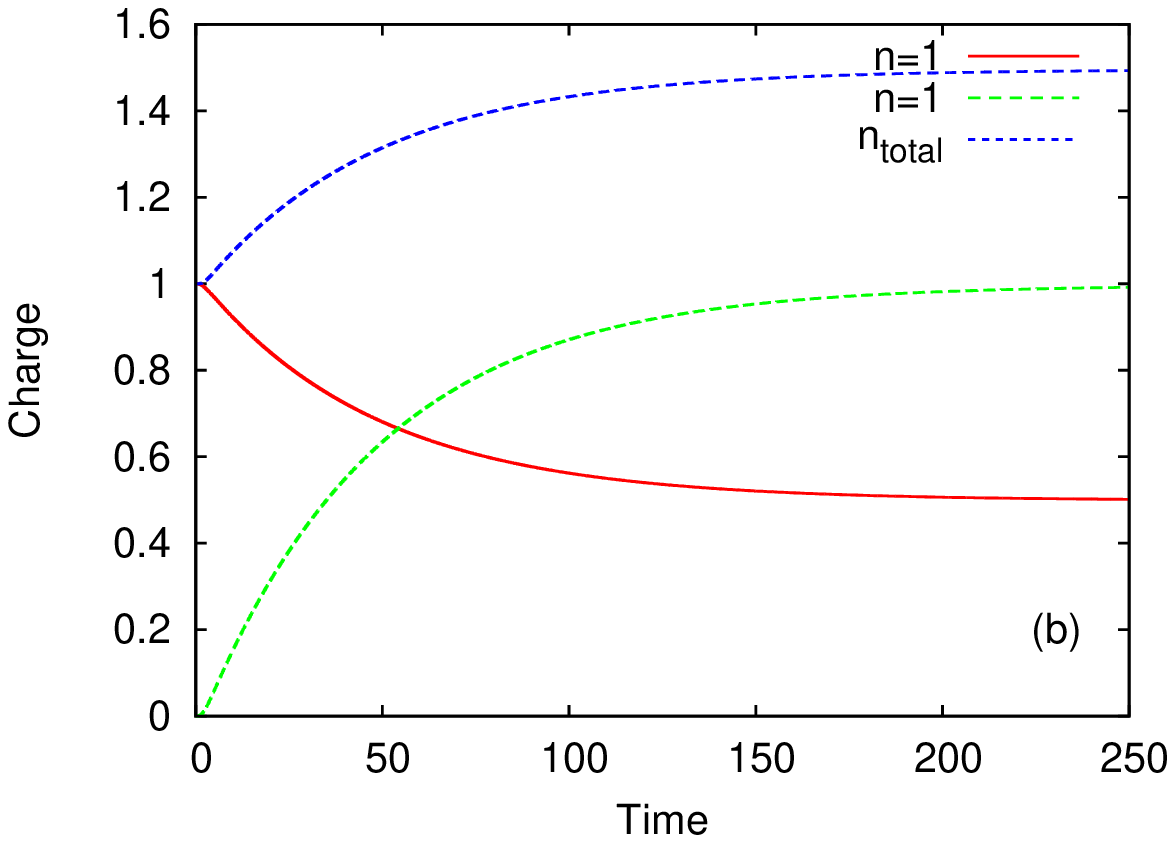}
\caption{(Color online)
(a) The total transient currents $J_L$ and $J_R$ as a function of time in the resonant regime. Two initial conditions were considered
$N=1$ corresponding to one electron on the lowest state and $N=0$ corresponding to an empty sample.
(b) The occupation of the many-body states with $n$ electrons and the total occupation.
Other parameters: $U=0.5$, $\tau=0.5$, $t_L=1.5$, $E_{{\rm shift}}=3$, $\mu_L=5$, $\mu_R=2$.}
\label{fig2}
\end{figure}

\begin{remark}
The results presented in this section were obtained by numerically implementing and solving the integro-differential equation 
for the reduced density operator which served us to calculate the transients. A legitimate question is how one could use
the GME method if interested only in the steady-state regime? The most tempting step is to assume that a 
steady-state exists, which in terms of the RDO means that $\lim_{t\to\infty}\dot\rho_r(t)=0$. If so, then one can calculate 
the stationary RDO from the GME equation and derive the steady-state currents. This strategy is extensively used in the 
physical literature. Our analysis shows that in the off-resonant regime such an approach is not justified because 
there is no steady-state. The correct procedure is to work out the time-dependent equations and calculate various 
contributions to the {\it ergodic} current which is the meaningful quantity to look at.    

\end{remark}

\section{Conclusions}\label{concluzii}

We have presented a rigorous approach to the cotunneling transport in weakly-coupled interacting quantum dots.
Using the expansion of the transient current in powers of the lead-dot coupling parameter $\tau$ we analysed
the leading order contribution (i.e. ${\cal O}(\tau^4)$) of the ergodic current which is the relevant 
quantity to consider in this regime. Explicit calculations for elastic and inelastic cotunneling contributions 
to transport were presented. For non-interacting electrons one recovers the Landauer formula. For a simple 
two-level system, we show that in the interacting case the cotunneling current depends on the initial many-body 
configuration of the dot. To our best knowledge,  
this memory-effect has not been reported before. An explicit formula for the ergodic sequential tunneling 
current (i.e. ${\cal O}(\tau^2)$) is given. This contribution vanishes in the cotunneling regime but
the transient sequential tunneling does not reach a stationary state. These results are also recovered
through numerical simulations via the generalized Master equation method. This method allows calculation 
of transient sequential tunneling currents. A generalized Master equation containing higher order terms  
has been recently reported \cite{Timm1},\cite{Koller}. This motivates a thorough rigorous analysis
 on the existence of a stationary regime for the reduced density operator. 

\vspace{0.5cm}

{\bf Acknowledgements}. Both authors acknowledge support from the Danish FNU grant {\it Mathematical Physics}.
V.\,M.\ acknowledges the financial support from PNCDI2 program (grant No.\ 515/2009),
Core Project (grant No.\ 45N/2009).

\end{document}